\documentclass{article}

\usepackage{graphicx}  
\usepackage{amsmath}   
\usepackage{amsthm}
\usepackage{amssymb}   
\usepackage{hyperref}  
\usepackage{geometry}  
\usepackage{authblk}   
\usepackage{natbib}    
\usepackage{enumitem}  
\usepackage{amsfonts}
\usepackage{bbm}
\usepackage{mathtools}
\usepackage{makecell}

\usepackage[vlined, ruled, linesnumbered]{algorithm2e}

\usepackage[suppress]{color-edits}
\addauthor[sz]{sz}{blue}

\usepackage{booktabs}

\newcommand{\hatv}{\hat{v}}

\newtheorem{lemma}{Lemma}

\newtheorem{theorem}{Theorem}

\newcommand{\eps}{\varepsilon}

\newcommand{\overv}{\overline{v}}
\newcommand{\underv}{\underline{v}}

\newcommand{\set}[1]{\left\{ #1 \right\}}

\newcommand{\attr}[2]{\mathsf{#1}.\mathsf{#2}}

\newcommand{\node}{\mathtt{node}}
\newtheorem*{theorem*}{Theorem}

\usepackage{physics}

\newtheorem{remark}{Remark}
\newtheorem{corollary}{Corollary}

\usepackage{cleveref}

\newcommand{\Reg}{\mathsf{Reg}}

\newcommand{\Rev}{\mathsf{Rev}}

\newcommand{\cM}{\mathcal{M}}

\newcommand{\cD}{\mathcal{D}}
\newcommand{\cI}{\mathcal{I}}

\title{Learning is Revelation in Disguise: Optimal Regret and Equivalence Results for Dynamic Pricing}
\author[1]{Shiliang Zuo}
\affil[1]{szuo.rs@gmail.com}
\date{}

\begin{document}

\maketitle  

\begin{abstract}

We study dynamic pricing where a seller repeatedly interacts with a strategic, non-myopic buyer who has a fixed private valuation and discounts future utility. Prior work focused exclusively on posted-price mechanisms, where the seller gives a take-it-or-leave-it offer. For our first result, we show that menu mechanisms consisting of allocation-payment contracts achieve $O(T_\gamma)$ regret, where $T_\gamma$ is the buyer's effective discounted time horizon. We also establish a $\Omega(T_\gamma)$ lower bound, demonstrating the bound is tight. Considering the geometric discounting buyer with a constant discount factor, our bound is $O(1)$, while prior bounds using posted-price mechanisms incur an unavoidable $\Omega(\log\log T)$ factor in regret. Our second contribution is more conceptual in nature. The problem of dynamic pricing sits at the intersection of two paradigms: learning with strategic agents in computer science / machine learning and revelation-principle-based mechanism design in economics, yet their relationship has remained unclear. We establish a fundamental equivalence: indirect learning-based mechanisms and direct revelation mechanisms achieve identical optimal regret. The adaptive, data-driven algorithms of online learning and explicit type elicitation are two languages towards solving the same problem.

\end{abstract}

\section{Introduction}

In this work we study the problem of \emph{dynamic pricing}: a seller repeatedly offers a product to a buyer over $T$ rounds. The buyer has a private valuation $v$ for the item that remains fixed across time, while the seller seeks to maximize cumulative revenue. This problem was introduced by \cite{amin2013learning} and has since been studied extensively (\cite{kleinberg2003value, mohri2014optimal, drutsa2017horizon, haghtalab2022learning, drutsa2020optimal}). Despite this rich line of work, two significant gaps remain. First, all prior work focuses exclusively on posted-price mechanisms, yet there exists a much richer class of mechanisms based on menus---and it is unclear whether this richer class can yield improved regret. Second, there is a conceptual gap. Dynamic pricing lies at the intersection of two disciplines: machine learning, which treats the problem as strategic online learning, and economics, which treats the problem as dynamic contracting. Yet the relationship between these two paradigms remains poorly understood. What is the precise relationship between these two perspectives? We elaborate on both gaps and present our contributions below.

\paragraph{Menu-based learning mechanisms.} 

\begin{table}[h] \centering \small \caption{Regret bounds for dynamic pricing ($\gamma_t = \gamma^t$, $\gamma < 1$). For lower bounds, \cite{amin2013learning} provide $\Omega(T_\gamma)$ in the non-myopic setting, while \cite{kleinberg2003value} provide $\Omega(\log \log T)$ in myopic setting. } \label{tab:comparison} \begin{tabular}{lcc} \toprule \textbf{Mechanism} & \textbf{Any $T, \gamma$} & \textbf{Constant $\gamma$ independent of $T$} \\ \midrule Posted-price \citep{drutsa2017horizon} & $O(T_\gamma \log T_\gamma \log\log T)$ & $O(\log\log T)$ \\ Posted-price \citep{haghtalab2022learning} & $O(\log T + T_\gamma \log T_\gamma)$ & $O(\log T)$ \\ \textbf{Menu, 2 items (Ours)} & $\boldsymbol{O(T_\gamma )}$ & $\boldsymbol{O(1)}$ \\ 
Lower bound for posted-price & \multicolumn{2}{c}{$\Omega(T_\gamma)$ / $\Omega(\log\log T)$} \\
\textbf{Lower bound for menu (Ours)} & \multicolumn{2}{c}{{$  \boldsymbol{\Omega(T_\gamma)}  $ }} \\
\bottomrule \end{tabular} \end{table}


Prior work has focused exclusively on \emph{posted-price mechanisms}, where the seller offers a single take-it-or-leave-it price in each round. Yet in principle, the seller can employ a much richer class of mechanisms: \emph{menus}. A menu mechanism offers the buyer a set of allocation-payment contracts, allowing the buyer to select among multiple options rather than making a binary accept/reject decision.

To illustrate the power of menus, consider first the simpler case of a myopic buyer. In the seminal work of \cite{kleinberg2003value}, it is shown that posted-price mechanisms can achieve regret $\Theta(\log \log T)$, and that such regret is necessary. However, with menus the seller can achieve regret $O(1)$. Specifically, the seller can design a menu $p: [0,1] \mapsto \mathbb{R}^+$ specifying a payment for each allocation probability. Once $p(\cdot)$ is specified, the buyer then chooses his preferred allocation probability. For example, setting $p(a) = a^2/2$ yields buyer utility $v \cdot a - p(a)$ in a single round, which is maximized at $a = v$. Thus, the buyer's choice immediately reveals their valuation. This raises our first research question:

\begin{center}
    \emph{Can menu mechanisms achieve better regret than posted-price mechanisms against non-myopic buyers?}
\end{center}

We answer this question affirmatively. We show that menu mechanisms achieve $O(T_\gamma)$ regret, where $T_\gamma$ denotes the buyer's effective time horizon. Hence, while it is believed since \cite{kleinberg2003value} that the dependence on $T$ could not be removed, we show this is not the case if the principal use menu-based mechanisms. In particular, our bound improves upon the best known bounds for posted-price mechanisms: $O(\log T + T_\gamma \log T_\gamma)$ (\cite{haghtalab2022learning}) and $O(T_\gamma \log T_\gamma \log \log T)$ (\cite{drutsa2017horizon}). For geometric discounting with constant factor $\gamma < 1$, our regret bound becomes $O(1)$, compared to the $\Omega(\log \log T)$ lower bound for posted-price mechanisms. Further, our menu only requires two entries each round. The key insight is regret measures a \emph{fundamental tradeoff} between allocative efficiency for lower type agents and information rent paid to high type agents, and that using menus (even with just two entries) enable a fine-grained control of the two quantities. 

\paragraph{Indirect learning mechanisms vs. direct revelation mechanisms.} Our second contribution is more conceptual in nature. The dynamic pricing problem has been studied from two seemingly distinct perspectives. 

The aforementioned machine learning / computer science perspective, treats the seller's problem as one of \emph{learning through interaction} \cite{kleinberg2003value, amin2013learning}. The seller posts prices sequentially, observes the buyer's accept/reject decisions, and adapts future prices based on this feedback. This approach appears algorithmic and data-driven, requiring no upfront communication from the buyer; instead, the buyer simply best-responds to the seller's pricing strategy. 

The economics perspective views this as a problem of \emph{dynamic contracting} \cite{bergemann2019dynamic, salanie2005economics, borgers2015introduction}. By the revelation principle, the seller can commit upfront to a \emph{direct revelation mechanism}: the buyer reports a type $\hat{v}$, and the seller commits to a complete sequence of allocations and payments $(a_t(\hat{v}), p_t(\hat{v}))_{t=1}^T$ contingent on this report. The mechanism must satisfy incentive compatibility (IC)---truthful reporting maximizes the buyer's utility---and individual rationality (IR). The seller's problem then reduces to optimizing revenue subject to these constraints.

These two perspectives appear fundamentally different. Mechanism design relies on explicit communication of types, while the learning approach is adaptive and driven by revealed preferences. The technical tools also differ substantially: mechanism design employs the revelation principle and solves structured optimization problems subject to incentive constraints, whereas online learning utilizes tools from exploration-exploitation tradeoffs and sequential decision-making under partial feedback. Yet both frameworks address the same underlying economic problem. This raises a natural question:

\begin{center}
    \emph{What is the relationship between learning-based and revelation-based approaches to mechanism design in dynamic strategic settings?}
\end{center}

We establish that the two approaches are \emph{equivalent}. Specifically, we prove that any indirect menu-based learning mechanism can be converted into a direct revelation mechanism, and conversely, any direct mechanism can be converted into an indirect menu mechanism via what we call the \emph{price equalization procedure}. As a corollary, the optimal regret achievable by learning algorithms equals that achievable by direct mechanisms.

This equivalence carries several implications. First, it reveals that learning algorithms implicitly solve mechanism design problems: any ``learning" algorithm is implicitly constructing an incentive-compatible direct mechanism. Second, it demonstrates that explicit type reports provide no additional power over preferences revealed through choices. Third, it offers a unified lens through which learning and mechanism design can be viewed as two languages for the same optimization problem.

Utilizing the equivalence result, we provide a proof of a $\Omega(T_\gamma)$ lower bound on regret. Hence, our upper bound on regret is tight. This lower bound was first established by \cite{amin2013learning} for posted-price mechanisms. We show this bound also holds for menu-based mechanisms; also, our proof clearly demonstrates that at the heart of the problem is a tradeoff between two quantities: allocative inefficiency for lower types versus information rents for higher types.

\subsection{Related Work and Discussion}

Firstly, the specific problem we studied in this work is built on the line of work on dynamic pricing against non-myopic agents. The problem has received significant attention in the computer science community in recent years, categorized under the broader class of problems of \emph{learning with strategic agents}. The problem was originally introduced by \cite{amin2013learning}, with followup works including \cite{haghtalab2022learning, drutsa2017horizon, mohri2014optimal}. Extensions to multiple buyers include \cite{liu2018learning, drutsa2020reserve}. The different lens of exploring dynamic pricing against non-myopic agents is explored in {economics} literature falling under the domain of \emph{dynamic mechanism design} (\citet{bergemann2019dynamic}). For the case where the principal and agent share the same discount factor, there is a well-known result: the optimal dynamic mechanism is a static one that pose the optimal Bayesian price, which is known as \emph{false dynamics} \cite{borgers2015introduction}. \footnote{In fact, as \cite{borgers2015introduction} point out, if the seller and buyer have different discounting sequence with the buyer discounting future utility more heavily, then the gains from intertemporal trade can be infinite. Indeed, the seller and buyer should sign the following debt contract: the seller pays the buyer 1 dollar in time period 1, and ask the buyer to repay $1/\gamma^T$ dollars (minus an infinitesimal small amount) in time period $T$. If the buyer has geometric discounting with rate $\gamma$, then the contract is strictly beneficial to both parties. We avoid this scenario by allowing the agent to reserve the right to drop out at any round (see \Cref{sec:setup}). }

There also exists work that explore the problem from a pure learning perspective where the agent is myopic and best responds each round. The early work that introduced the problem include \cite{kleinberg2003value} (for which the literature on pricing against non-myopic agents was built upon), and later also extended to incorporate contextual information ~\cite{liu2021optimal, mao2018contextual}. Robust versions of the pricing algorithms were also explored in~\cite{krishnamurthy2021contextual, zuo2024corruption}.

Our work is also closely related the line of work on contract theory, in particular the use of menus in principal-agent models with adverse selection (see \cite{salanie2005economics, bolton2004contract, laffont1981theory}). Relatedly, some recent work in the computer science community studied computational aspects of menus in principal-agent problems (e.g. \cite{han2024learning}, \cite{guruganesh2023power}, \cite{castiglioni2022designing}).

Another related work is \cite{albert2025learning}; the focus of their work was on characterizing when the \emph{false dynamics} scenario can be escaped, i.e., what game structure allows the principal to reason about the agent's private type through repeated interactions (when they share the same discount factor). Another related work is \cite{conitzer2004computational}; their work was not on the dynamic setting where ``learning" is involved, and focus more on the computational perspective. They discuss the direct revelation mechanism and what they call an \emph{insincere} mechanism, and argue that in the direct revelation mechanism the computational burden is on the principal, while in the insincere mechanism the computational burden is shifted to the agent.

\section{Problem Setting: Dynamic Pricing against a Non-Myopic Agent}
\label{sec:setup}

There are two parties, a seller and a buyer. We may also refer to the seller as the principal and the buyer as the agent. They interact in discrete time for $T$ rounds. The buyer holds a private valuation 
\[
v\in [0, 1]
\]
for the item. At each round, the seller has one unit of the item she would like to sell to the buyer. The seller is patient and does not discount future utility. The buyer is impatient, modeled by a sequence of discount factors $\gamma_t$:
\[
\gamma_1 \ge \gamma_2 \ge \dots \ge \gamma_T > 0. 
\]
We write $T_\gamma = \sum_{t=1}^T \gamma_t$ as the buyer's discounted time horizon. A popular and instructive choice for the sequence is the geometric discounting sequence with $\gamma_t = \gamma^t$ for some $\gamma < 1$. 

At each round the outcome is a pair of payment and allocation probabilities $(p_t, a_t)$. The seller's performance is measured by the strategic regret, as defined by the following:
\begin{align*}
\mathtt{Reg}(v) &= T\cdot v - \sum_{t=1}^T p_t, \quad \mathtt{Reg} = \sup_{v\in [0, 1]} \mathtt{Reg}(v)
\end{align*}
Here $\mathtt{Reg}(v)$ measures the difference of the optimal payment value the principal could have collected had she known the agent's true type, versus the actual payment collected.

The buyer's utility is given according to his discounting factor and the actual allocation and payment outcomes:
\[
\sum_{t=1}^T \gamma_t (a_t\cdot  v - p_t). 
\]
We now describe the classes of mechanisms that determine the actual payment and allocations $(p_t, a_t)$.

\subsection{Indirect Menu-based Learning Mechanisms} We begin by describing posted-price mechanisms, the focus of prior work, and then introduce the more general menu-based mechanisms studied in this paper. \paragraph{Posted-Price Mechanisms.} In a posted-price mechanism, the principal posts a price each round and the buyer makes a binary decision to accept or reject. Formally, the mechanism is described by a sequence of mappings $\mathcal{M} = (\mathcal{M}_t)_{t=1}^T$, where \[ \mathcal{M}_t: (p_\tau, a_\tau)_{\tau=1}^{t-1} \mapsto p_t, \] and $a_\tau \in \{0,1\}$ denotes the buyer's accept/reject decision at round $\tau$. The mechanism is announced to the buyer before the game begins; given $\mathcal{M}$, the buyer strategically responds to maximize his total utility over the $T$ rounds. Equivalently, such a mechanism can be represented as a \emph{binary decision tree} of depth $T$. Each non-leaf node is associated with a posted price. A rejection moves the buyer to the left child; an acceptance moves him to the right child. Leaf nodes encode complete interactions of length $T$. The buyer chooses the root-to-leaf path that maximizes his utility. 

\paragraph{Menu-Based Mechanisms.} A key departure in our work is allowing fractional allocation probabilities instead of binary allocation decisions. In a menu-based mechanism, at each round the principal posts a \emph{menu of contracts}: \[ \set{(a, p(a)): a\in A \subset [0,1]}. \] Here $a$ is an allocation probability and $p(a)$ is the associated payment. The agent selects an allocation level $a \in A$ and pays $p(a)$, upon which the item is allocated with probability $a$. The payment is made ex-ante regardless of the actual allocation outcome ex-post. Formally, a menu-based mechanism is described by mappings $\mathcal{M} = (\mathcal{M}_t)_{t=1}^T$, where \[ \mathcal{M}_t: (M_\tau, a_\tau)_{\tau=1}^{t-1} \mapsto M_t, \] with $M_t$ denoting the menu offered at round $t$ and $a_\tau \in [0,1]$ the allocation probability selected by the buyer at round $\tau$. Posted-price mechanisms are the special case where each menu contains only two options: $(1, p_t)$ (accept at price $p_t$) and $(0, 0)$ (reject).


\subsection{Direct Revelation Mechanism}
In a direct mechanism, the agent is first asked to report his private type  
(a ``bid''). The report $\hatv \in [0, 1]$ may differ from the  
agent’s true value $v$. For each report $\hatv$, the principal commits to a  
sequence of per-round contracts  
\[
  \hatv\mapsto (p_t(\hatv), a_t(\hatv))_{t=1}^T,
\]
which specify the payment $p_t(\hatv)$ and allocation probability $a_t(\hatv)$ in each  
period. 

Each pair $(p_t(\hatv), a_t(\hatv))$ functions as a spot contract: if the  
buyer continues participating in round $t$, he pays $p_t(\hatv)$ and receives  
allocation $a_t(\hatv)$. The seller and buyer executes this contract sequence  
\emph{sequentially over time}, but the agent always retains the right to  
\emph{drop out at any round}. Dropping out terminates the mechanism  
immediately. The interaction in the direct revelation mechanism proceeds as follows. 
\begin{enumerate}
\item The principal announces the direct mechanism to the agent. 
\item The agent makes a report $\hatv$ on his private type. 
\item At the beginning of each round $t$, the agent must decide whether to drop out or not. If he drops out the game immediately ends; otherwise he is charged $p_t(\hatv)$ and receives the item with probability $a_t(\hatv)$, then the game proceeds to the next round. 
\end{enumerate}

A direct mechanism must satisfy two key conditions:

\begin{itemize}
  \item \emph{Incentive compatibility (IC):} truthful reporting maximizes the
    agent’s discounted utility among all possible reports.
  \item \emph{Periodic individual rationality (PIR):} at every round, a truthful
    agent weakly prefers continuing participation over dropping out. 
\end{itemize}

Together, IC and PIR ensure that an agent is never worse off by reporting his true type and remaining in the mechanism throughout. Similar PIR constraints have also appeared in prior work, e.g. \cite{bergemann2010dynamic, balseiro2018dynamic}. 

To formalize these requirements, we define the discounted utility of a type-$v$
agent who reports $\hatv$ and accepts all contracts from periods $1$ through
$t$ as
\[
  U_{1:t}(\hatv; v)
    := \sum_{\tau = 1}^{t}
       \gamma_\tau [a_\tau(\hatv) v - p_\tau(\hatv) ].
\]
IC and PIR together then require that for all types $v$, all reports $\hatv$, and all
periods $t$,
\begin{equation}
  U_{1:T}(v; v) \;\ge\; U_{1:t}(\hatv; v).
\end{equation}

\section{Optimal Regret via Menus}
\label{sec:menu}
We now demonstrate that by employing \emph{menus of contracts}, the seller can achieve a regret bound of $O(T_\gamma)$, hence breaking the $T$-dependent $\Omega(\log \log T)$ as present in all prior works using posted-price mechanisms. We later also show a $\Omega(T_\gamma)$ lower bound, showing the bound is tight up to constant factors. Following previous work, we focus on the geometric discounting sequence $\gamma_t = \gamma^t$ for some $\gamma < 1$. \footnote{This is mostly for exposition purpose. The ideas can be extended to more general discount sequences with appropriate modifications to the algorithm. } 



\subsection{Menu-based Learning Mechanism Design}
\label{sec:two-item-menu}

\begin{algorithm}[H]
\SetAlgoLined
\KwIn{Initial candidate interval $[0,1]$, discount parameter $\gamma$, constant $\rho>0$}
Initialize $\underline{v}_1 \gets 0$ and $\overline{v}_1 \gets 1$\;
Set $t \gets 1$\;

\While{$t < T$}{
    $\delta_t \gets \overline{v}_t - \underline{v}_t$\;
    
    $m_t \gets (\underline{v}_t+\overline{v}_t)/2$\;
    
    Define the two menu entries
    \[
        L_t \gets \bigl(1-\delta_t,(1-\delta_t)\underline{v}_t\bigr),
        \qquad
        H_t \gets \bigl(1,\underline{v}_t+\delta_t^2/2\bigr).
    \]
    
    Set
    \[
        D_t \gets 
        \left\lceil
        \frac{
        \log\left(\frac{1+\rho\delta_t^2}{\rho\delta_t^2}\right)
        }
        {\log(1/\gamma)}
        \right\rceil .
    \]
    
    Post the two-item menu 
    \[
        \mathcal{M}_t=\{L_t,H_t\}
    \]
    in the first round of the phase\;
    
    Let $C_t\in\{L_t,H_t\}$ be the agent's choice in the first round of this phase\;
    
    \For{$s=t+1,\ldots,\min\{t+D_t-1,T\}$}{
        Post the same menu $\mathcal{M}_t=\{L_t,H_t\}$\;
        
        \If{the agent's choice is not $C_t$}{
            Offer only the outside option $(0,0)$ in every future round\;
            
            Terminate the game immediately\;
        }
    }
    
    \eIf{$C_t=L_t$}{
        Update
        \[
            [\underline{v}_{t+D_t},\overline{v}_{t+D_t}]
            \gets
            [\underline{v}_t,m_t+\delta_t/8]
            \cap
            [\underline{v}_t,\overline{v}_t].
        \]
    }{
        Update
        \[
            [\underline{v}_{t+D_t},\overline{v}_{t+D_t}]
            \gets
            [m_t-\delta_t/8,\overline{v}_t]
            \cap
            [\underline{v}_t,\overline{v}_t].
        \]
    }
    
    $t \gets t+D_t$\;
}
\caption{Consistent Two-Item Menu Search}
\label{alg:menu-two-item}
\end{algorithm}

We now present a two-item menu algorithmic learning mechanism that achieves the optimal regret
bound. The mechanism is summarized in \Cref{alg:menu-two-item}. At a high level, the seller maintains an interval
$I_t=[\underv_t,\overv_t]$ that contains the buyer's value and repeatedly uses
a two-item menu to ask whether the value lies above or below the midpoint of
the interval.

Let
\[
    \delta_t := \overv_t-\underv_t,
    \qquad
    m_t := \frac{\underv_t+\overv_t}{2}.
\]
The menu offered in the phase is
\[
    L_t =
    \bigl(1-\delta_t,(1-\delta_t)\underv_t\bigr),
    \qquad
    H_t =
    \bigl(1,\underv_t+\delta_t^2/2\bigr).
\]
For a buyer of value $v$, the utility difference between the two options is
\[
\begin{aligned}
    U_{H_t}(v)-U_{L_t}(v)
    &=
    \left(v-\underv_t-\frac{\delta_t^2}{2}\right)
    -(1-\delta_t)(v-\underv_t)  \\
    &=
    \delta_t(v-\underv_t)-\frac{\delta_t^2}{2} \\
    &=
    \delta_t(v-m_t).
\end{aligned}
\]
Thus a myopic buyer with $v>m_t$ prefers $H_t$, while a myopic buyer with
$v<m_t$ prefers $L_t$.

The mechanism proceeds in phases, and repeats this same two-item menu for an entire phase.  The buyer's
first choice determines the branch of the binary search.  If the buyer ever
switches between $L_t$ and $H_t$ within the phase, the seller offers only the
outside option $(0,0)$ in every future round.  This consistency requirement
amplifies the cost of inducing the wrong branch: a buyer who wants to be
misclassified must choose the wrong menu item throughout the phase, rather than
only once.

Fix a sufficiently small absolute constant $\rho>0$.  The length of a phase
with interval width $\delta_t$ is
\[
    D_t
    :=
    \left\lceil
    \frac{
    \log\left(\frac{1+\rho\delta_t^2}{\rho\delta_t^2}\right)
    }
    {\log(1/\gamma)}
    \right\rceil .
\]
Equivalently,
\[
    D_t
    =
    O\left(T_\gamma \log\frac{1}{\delta_t}\right).
\]
During the phase, the seller offers the same menu $\{L_t,H_t\}$ for $D_t$
rounds.  If the buyer chooses $L_t$ consistently throughout the phase, the
seller updates the interval to
\[
    I_{t+D_t}
    :=
    [\underv_t,m_t+\delta_t/8].
\]
If the buyer chooses $H_t$ consistently throughout the phase, the seller updates
the interval to
\[
    I_{t+D_t}
    :=
    [m_t-\delta_t/8,\overv_t].
\]
Both successor intervals have length at most
\[
    \frac{5}{8}\delta_t.
\]

\subsection{Analysis}

We state the key results for the main regret bound, the full proof is deferred to Appendix~\ref{sec:proof-menu-tight}. We first show that the candidate interval $[\underv_t, \overv_t]$ always contains the true hidden value $v$. 
\begin{lemma}[Consistency of the phase]
\label{lem:phase-consistency}
There exists an absolute constant $\rho>0$ such that the following holds in
every phase.  If $v\ge m_t+\delta_t/8$, then the buyer weakly prefers choosing
$H_t$ consistently throughout the phase to choosing $L_t$ consistently and
being placed in the left branch.  Similarly, if $v\le m_t-\delta_t/8$, then the
buyer weakly prefers choosing $L_t$ consistently throughout the phase to
choosing $H_t$ consistently and being placed in the right branch.
\end{lemma}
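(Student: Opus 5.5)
We compare the buyer's total discounted utility along the two candidate strategies, starting from the first round $t$ of the phase. We treat the case $v\ge m_t+\delta_t/8$; the other case is symmetric.

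\textbf{In-phase gain.} Within the phase, choosing $H_t$ instead of $L_t$ gains $\delta_t(v-m_t)\ge \delta_t^2/8$ per round, by the utility-difference identity computed above. Summed with discounts over the phase, the consistent-$H_t$ path therefore earns at least
\[
\frac{\delta_t^2}{8}\sum_{s=t}^{t+D_t-1}\gamma^s=\frac{\delta_t^2}{8}\,\gamma^t\,\frac{1-\gamma^{D_t}}{1-\gamma}
\]
more than the consistent-$L_t$ path during the phase. The truncation at $T$ only helps, since after $T$ there is no continuation.

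\textbf{Continuation difference.} After the phase, the continuation values from the two branches can differ. Normalize by $\gamma^{t+D_t}$. Every menu offered later contains an option giving per-round utility in $[-1,1]$, and the buyer can always drop to low utility, but never below $0$ if he consistently picks IR-safe items. A crude bound on the advantage of the left branch is the total remaining discounted surplus, at most $\sum_{s\ge t+D_t}\gamma^s\le \gamma^{t+D_t}/(1-\gamma)$.

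A sharper bound uses the prices in the left branch. All future $\underv$ there are at most $m_t+\delta_t/8\le v$, while in the right branch the prices charged are at least $\underv\ge m_t-\delta_t/8$. I would bound the per-round utility gap between the branches by $O(1)$, since valuations and prices lie in $[0,1]$ and allocations lie in $[0,1]$.

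\textbf{Key inequality.} With an $O(1)$ per-round gap, the condition to verify is
\[
\frac{\delta_t^2}{8}\,(1-\gamma^{D_t}) \ge C\,\gamma^{D_t}.
\]
The choice of $D_t$ gives $\gamma^{D_t}\le \rho\delta_t^2/(1+\rho\delta_t^2)$, and hence $1-\gamma^{D_t}\ge 1/(1+\rho\delta_t^2)$. The condition then reduces to $\delta_t^2/8\ge C\rho\delta_t^2$, which holds for $\rho\le 1/(8C)$. This is exactly how $D_t$ was designed, so the main work is establishing the constant $C$.

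\textbf{Main obstacle.} The hard step is bounding uniformly, with an absolute constant, how much more utility the buyer can extract from the left subtree than from the right subtree. I would first try the trivial bound: continuation utility in any subtree lies in $[0,\,\gamma^{t+D_t}/(1-\gamma)]$, using that the zero-utility option is always available via deviation and termination. Nonnegativity would follow because every item satisfies $p\le a\,\underv\le a v$ for entries $L$, and $H$ gives nonnegative utility when $v\ge m$.

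The catch is that this needs a nonnegative lower bound on the right-branch value. The honest right-branch buyer may face items priced above $v$ only if the interval excludes $v$. Since $v\ge m_t-\delta_t/8$ places $v$ in the right interval, I would argue inductively that a buyer can always obtain nonnegative utility inside a subtree containing $v$ by choosing $L$, whose price is $(1-\delta)\underv\le(1-\delta)v$. This gives $C=1$, so $\rho=1/8$ suffices.
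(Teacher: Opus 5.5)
Your proposal is correct and follows essentially the same argument as the paper: it weighs an in-phase loss of at least $\delta_t^2/8$ per round, summed over the phase, against an $O(1)$ per-round continuation gain over the tail, uses the definition of $D_t$ to get $\gamma^{D_t}\le \rho\delta_t^2/(1+\rho\delta_t^2)$, and hence needs only $\rho\le 1/(8C)$. Your extra step fixing $C=1$ (right-branch continuation is nonnegative, left-branch continuation is at most the tail sum) validly sharpens the paper's unspecified constant; note that always choosing $L$ works because the left successor keeps $\underv$ unchanged, so the preserved invariant is $\underv\le v$, not that $v$ stays inside the interval.
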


The overlap between the two successor intervals handles types close to the
midpoint.  If
\[
    v\in[m_t-\delta_t/8,m_t+\delta_t/8],
\]
then either successor interval still contains $v$, so either branch is
acceptable for the purpose of future learning.

Next, the regret of each phase can bounded by the phase's duration and the length of the candidate interval during the phase. 
\begin{lemma}[Regret of one phase]
\label{lem:phase-regret}
The regret incurred during a phase of width $\delta_t$ is
$
    O\bigl(\delta_t D_t\bigr).
$
\end{lemma}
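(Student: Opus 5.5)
We bound the per-round regret $v-p_s$ during the phase and then multiply by the number of rounds, which is at most $D_t$. Regret is measured relative to the true value $v$. By \Cref{lem:phase-consistency} and the overlap argument following it, we may assume inductively that $v\in[\underv_t,\overv_t]$ at the start of the phase, so $0\le v-\underv_t\le\delta_t$. We also assume the buyer stays in the mechanism, choosing either $L_t$ or $H_t$ in each round of the phase. If he deviates and triggers the outside option, the regret of the remaining rounds belongs to the termination analysis, not to this phase. For the purpose of this lemma we only need to control the payment collected in a round in which one of the two menu items is selected.

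\textbf{Per-round regret for each menu item.} There are two cases, one for each item.
\begin{itemize}
\item If the buyer picks $H_t$, the round's regret is
\[
v-\underv_t-\delta_t^2/2\le \delta_t .
\]
\item If the buyer picks $L_t$, the round's regret is
\[
v-(1-\delta_t)\underv_t=(v-\underv_t)+\delta_t\underv_t\le \delta_t+\delta_t=2\delta_t,
\]
using $\underv_t\le 1$.
\end{itemize}
In either case the per-round regret is at most $2\delta_t$, independently of which item the buyer picks and of whether that choice is the truthful one. Summing over the at most $D_t$ rounds of the phase gives regret at most $2\delta_tD_t=O(\delta_tD_t)$.

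\textbf{Main obstacle: the invariant $v\in I_t$.} This invariant is the only nontrivial input; the rest is arithmetic. I would obtain it by induction over phases.
\begin{itemize}
\item If $v\ge m_t+\delta_t/8$, \Cref{lem:phase-consistency} says the buyer weakly prefers consistent $H_t$. The right successor interval $[m_t-\delta_t/8,\overv_t]$ then contains $v$.
\item If $v\le m_t-\delta_t/8$, the same lemma gives the symmetric conclusion for $L_t$ and the left successor interval.
\item If $v$ is within $\delta_t/8$ of $m_t$, both successor intervals contain $v$, so either choice preserves the invariant.
\end{itemize}
If the buyer were to break ties adversarially, the argument would need to make sure the induction still holds. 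The lemma's statement is phrased with weak preference, so under the standard tie-breaking convention (the buyer chooses the action favourable to the seller) the invariant follows directly.
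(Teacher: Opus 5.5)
Your proposal is correct and is essentially the paper's argument. The paper also uses $v\le \overv_t=\underv_t+\delta_t$ together with the revenue bounds $(1-\delta_t)\underv_t\ge\underv_t-\delta_t$ and $\underv_t+\delta_t^2/2\ge\underv_t$ to get per-round regret at most $2\delta_t$, multiplies by $D_t$, and takes the invariant $v\in[\underv_t,\overv_t]$ from \Cref{lem:phase-consistency} and the overlap of the successor intervals; like the paper, you only bound rounds in which one of the two menu items is chosen, so mid-phase termination is excluded by assumption in both, and you simply make that restriction explicit.
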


\begin{theorem}[Optimal regret of two-item menus]
\label{thm:menu-regret-tight}
The two-item menu mechanism described above achieves regret
\(
    O(T_\gamma).
\)
\end{theorem}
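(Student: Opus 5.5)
We prove \Cref{thm:menu-regret-tight} by summing the per-phase bound of \Cref{lem:phase-regret} over all phases, using the geometric shrinkage of the candidate interval. \Cref{lem:phase-consistency} says the buyer's chosen branch always keeps $v$ in the candidate interval. Hence we may assume $v\in I_t$ throughout; in particular the buyer never deviates mid-phase into the outside option. Such a deviation would also only hurt him relative to the consistent plan, by the same utility comparison used in that lemma.

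Index the phases by $k=0,1,2,\dots$ and write $\delta^{(k)}$ and $D^{(k)}$ for the width and length of phase $k$. Each update shrinks the width by a factor of at least $5/8$, so $\delta^{(k)}\le (5/8)^k$. Combining \Cref{lem:phase-regret} with $D^{(k)}=O\bigl(T_\gamma\log(1/\delta^{(k)})\bigr)$, the regret of phase $k$ is
\[
O\Bigl(\delta^{(k)}\,T_\gamma\log\frac{1}{\delta^{(k)}}\Bigr).
\]

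The function $x\mapsto x\log(1/x)$ is increasing on $(0,e^{-1}]$ and bounded by a constant on $[0,1]$. Hence, apart from the first $O(1)$ phases (which contribute $O(T_\gamma)$), the contribution of phase $k$ is at most $O\bigl(T_\gamma (5/8)^k k\bigr)$. The series $\sum_k k(5/8)^k$ converges to an absolute constant. The total regret, taking the supremum over $v$, is therefore $O(T_\gamma)$, independently of $T$ and of how many phases actually occur before round $T$. A truncated final phase has regret at most that of a full phase, so it is also covered.

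One technical point needs care. The expression $D_t=O(T_\gamma\log(1/\delta_t))$ must hold uniformly, including when $\delta_t$ is close to $1$, where $\log(1/\delta_t)\approx 0$ but $D_t\ge 1$. We handle this by using the bound $D_t\le 1+\log\bigl((1+\rho\delta_t^2)/(\rho\delta_t^2)\bigr)/\log(1/\gamma)$ together with $1/\log(1/\gamma)\le 1/(1-\gamma)=O(T_\gamma)$. This gives $D_t=O\bigl(T_\gamma(1+\log(1/\delta_t))\bigr)$, and since $\sum_k (5/8)^k(1+k)$ is also a constant, the summation argument is unaffected. With \Cref{lem:phase-consistency} and \Cref{lem:phase-regret} assumed, the main obstacle is this bookkeeping. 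The real difficulty is inside \Cref{lem:phase-regret}: the per-phase regret must scale as $\delta_t$ times the length of the phase, with no additive $T_\gamma$ term paid per phase. An additive term of that kind would sum to $O(T_\gamma\log(1/\delta))$ and spoil the bound.
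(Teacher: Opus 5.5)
Your proposal is correct and follows the same route as the paper: you bound each phase's regret by $O(\delta_k D_k)$ via \Cref{lem:phase-regret}, use $\delta_k\le(5/8)^k$ and $D_k=O\bigl(T_\gamma\log(1/\delta_k)\bigr)$, and sum the convergent series $\sum_k k(5/8)^k$. Your extra bookkeeping (the additive $1$ from the ceiling when $\delta_k$ is near $1$, and the monotonicity of $x\log(1/x)$) tidies up points the paper glosses over, though, like the paper, your step $1/(1-\gamma)=O(T_\gamma)$ tacitly assumes $\gamma$ bounded away from $0$ and $T\gtrsim 1/(1-\gamma)$ (when $T\lesssim 1/(1-\gamma)$ the bound is trivial anyway, since then regret $\le T=O(T_\gamma)$).
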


\section{Equivalence Results}
\label{sec:menu-equiv}
Here we show that \emph{indirect menu-based learning} mechanisms are equivalent to \emph{direct revelation} mechanism. 
Given a mechanism $\cM$ (either a direct revelation type or an indirect learning type), we define the revenue to the seller as the total collected monetary payment when an agent with private type (value) $v$ best responds to the mechanism: 
$\Rev(v; \cM) = \sum_{t=1}^T p_t$. We show that for any indirect mechanism, there exists a direct mechanism whose revenue is at least no less than the indirect one for any type; and vice versa. 


\subsection{Converting Indirect Menu Mechanism to Direct Mechanism}

The menu-based indirect learning mechanism can be converted to a direct revelation type mechanism by the revelation principle. For each type $v \in [0, 1]$, let 
\[
H(v):=(a_t(v), p_t(v))_{t=1:T}
\]
be the trajectory generated by the principal's menu mechanism and the agent's best response strategy in the indirect mechanism. 

\begin{theorem}\label{thm:menu-indirect-to-direct}
For any indirect menu-based learning mechanism $\cI$, there exists a direct mechanism $\cD$ with the same outcome and revenue, i.e, $\Rev(v, \cD) = \Rev(v, \cI), \forall v$. In particular the trajectory $H(v)$ from the indirect mechanism is itself an IC and PIR direct mechanism. 
\end{theorem}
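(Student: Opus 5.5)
We prove this with the standard revelation-principle argument: define the direct mechanism $\cD$ by $\hatv \mapsto H(\hatv) = (a_t(\hatv), p_t(\hatv))_{t=1}^T$, the on-path trajectory of type $\hatv$ when it best responds in $\cI$. Revenue equality is then immediate. A truthful type $v$ in $\cD$ receives exactly $H(v)$, provided it never drops out, which PIR guarantees. Hence $\Rev(v;\cD)=\sum_t p_t(v)=\Rev(v;\cI)$. The work is in checking the combined IC/PIR inequality $U_{1:T}(v;v)\ge U_{1:t}(\hatv;v)$ for all $v,\hatv,t$.

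\textbf{Step 1 (pure misreports, $t=T$).} A type-$v$ agent in $\cI$ can imitate type $\hatv$ by selecting, at every round, the menu item that $\hatv$'s best response selects along its path. Because $\cM_t$ depends only on the history of menus and chosen allocations, this imitation reproduces the trajectory $H(\hatv)$ exactly. Optimality of $v$'s best response therefore gives $U_{1:T}(v;v)\ge U_{1:T}(\hatv;v)$.

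\textbf{Step 2 (misreport followed by dropout).} We need $U_{1:T}(v;v)\ge U_{1:t}(\hatv;v)$ for $t<T$. The idea is to mimic $\hatv$ for rounds $1,\dots,t$ and then guarantee nonnegative utility afterwards.

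This needs the indirect mechanism to offer a zero-utility option in every round. I will argue that menus always contain an outside option $(0,0)$, as in the posted-price special case and in Algorithm~\ref{alg:menu-two-item}. If the general model does not state this, I will adopt it as the standing participation assumption: the buyer can always take $(0,0)$, equivalently walk away. This is exactly what PIR in the direct mechanism mirrors.

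Given this, the deviation ``follow $\hatv$'s choices through round $t$, then choose $(0,0)$ forever'' yields discounted utility exactly $U_{1:t}(\hatv;v)$. Optimality of $v$'s best response again gives the inequality. Taking $\hatv=v$ in this argument yields PIR for the truthful agent: continuing at every round is weakly better than stopping.

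\textbf{Main obstacle.} The obstacle is precisely this outside-option issue. Without $(0,0)$ available in each round, an indirect mechanism could lock the buyer into negative-utility contracts, and its trajectory would violate PIR. I will therefore state explicitly that the menu mechanisms considered always include $(0,0)$. I will also flag two further points. First, best responses are taken as deterministic, with ties broken consistently, so that $H(v)$ is well defined. Second, the randomness of allocation with probability $a$ does not affect the argument, since utilities are linear in $a_t$ and payments are ex-ante.
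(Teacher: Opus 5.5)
Your proposal is correct and follows the paper's proof. Both take $H(\hatv)$ as the direct mechanism and obtain $U_{1:T}(v;v)\ge U_{1:t}(\hatv;v)$ because type $v$'s best response weakly dominates the deviation ``mimic $\hatv$ through round $t$, then select $(0,0)$ in every later round.'' The paper relies on the availability of $(0,0)$ in exactly this way without stating it, so your explicit outside-option assumption is a useful clarification rather than a departure; note, though, that the menus $\{L_t,H_t\}$ of Algorithm~\ref{alg:menu-two-item} do not literally list $(0,0)$, so your claim that they do is not accurate as stated.
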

The proof is by an application of the revelation principal which we defer to Appendix \ref{app:proof-menu-convert}. An interesting implication is that any indirect ``learning" mechanism is in fact providing an implicit solution to a revelation-based highly structured optimization problem subject to incentive constraints.

\subsection{Converting Direct Mechanism to Indirect Menu Mechanism}
\begin{algorithm}[H]

    \caption{Converting direct mechanism to menu-based learning mechanism}\label{alg:menu-direct-to-indirect}

    Input: direct mechanisms $v\mapsto (p_t(v), a_t(v))_{t=1}^{T}$\;
    
    \tcp{Construct a tree. Each node has two attributes: $\mathsf{values}$ and $\mathsf{menu}$}
    Construct root node $\mathsf{root}$ (depth is 1), set $\attr{\mathsf{root}}{values} = [0, 1]$\;
    
    \tcp{Price Equalization for Menu Construction}
    \For{$t \gets 1$ \KwTo $T$} {
    \label{algo:menu-adjustment-phase}
        \For {$\node$ in depth $t$} {
            \tcc{Iterate over all nodes at level $t$, compute the correct menu for the node}
            $S = \attr{node}{values}$\;
            Let $A$ be the set of allocation probabilities that occur in the set $S$, i.e., $A = \{a_t(v) : v \in S\}$\;
            \For {$a \in A$} {
                Let $S(a) = \{v \in S : a_t(v) = a\}$ be the set of reports whose allocation probability is $a$\;
                Let $p_{\min}(a) = \min_{v \in S(a)} p_t(v)$ be the minimum price that occurs in the direct mechanism\;
                Set $p_{\min}(0) = 0$\;
                \tcc{Adjustment Steps: equalize prices and defer differences}
                \label{algoline:adjustment-steps}
                \For {$v \in S(a)$} {
                    \label{alg-line:menu-adjust} Set $p_{t+1}(v) := p_{t+1}(v) + \frac{\gamma_t}{\gamma_{t+1}} \cdot (p_t(v) - p_{\min}(a))$\;
                    Set $p_t(v) := p_{\min}(a)$\;
                }
            }
            Set the menu for the current node as $\attr{node}{menu} = \{(a, p_{\min}(a)) : a \in A\}$\;
            \For{$a \in A$}{
                Create child node $\mathsf{child}_a$ with $\attr{child_a}{values} = S(a)$\;
            }
        }
    }
\end{algorithm}

We now establish the converse direction: any direct mechanism can be implemented as an indirect learning mechanism.

\begin{theorem}\label{thm:menu-direct-to-indirect}
    Given any direct mechanism $\cD: v \mapsto (a_t(v), p_t(v))_{t=1:T}$ that is IC and PIR,  Algorithm~\ref{alg:menu-direct-to-indirect} produces an indirect menu-based mechanism $\cI$ satisfying: $ \forall v,\, \mathrm{Rev}(v, \mathcal{I}) \geq \mathrm{Rev}(v, \mathcal{D})$. 
\end{theorem}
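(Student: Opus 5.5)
We track what happens to a fixed type $v$ who plays the \emph{truthful path} in the indirect mechanism $\cI$: at depth $t$, the node reached is the one whose $\mathsf{values}$ set contains $v$, and he picks the entry $a_t(v)$. The plan is to show three things. First, the adjusted schedule stays IC and PIR after every step of Algorithm~\ref{alg:menu-direct-to-indirect}. Second, the truthful path is a best response in $\cI$, so the induced trajectory is the final adjusted schedule for $v$. Third, this adjusted schedule pays at least as much as $\cD$ does.

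\textbf{Invariance under one adjustment step.} Take a single equalization at round $t$ for a type $v\in S(a)$. It lowers $p_t(v)$ by $\Delta=p_t(v)-p_{\min}(a)\ge 0$ and raises $p_{t+1}(v)$ by $\frac{\gamma_t}{\gamma_{t+1}}\Delta$. The discounted total $\sum_\tau \gamma_\tau p_\tau(v)$ is therefore unchanged. Hence $U_{1:T}(\hatv;v)$ is unchanged for every true type $v$ and every report $\hatv$. Partial utilities $U_{1:s}(\hatv;v)$ are unchanged for $s\ne t$ and only \emph{increase} for $s=t$. So the left side of the IC/PIR inequality $U_{1:T}(v;v)\ge U_{1:s}(\hatv;v)$ is preserved, but the right side can grow at $s=t$.

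This is the main obstacle. I would handle it by noting that the adjusted report $\hatv$ at round $t$ now has price $p_{\min}(a)=p_t(w)$ for some $w\in S(a)$. Both $w$ and $\hatv$ sit at the same node, so they share the entire history of allocations and equalized prices up to round $t$. Therefore
\[
U_{1:t}(\hatv;v)_{\text{new}}=U_{1:t}(w;v)_{\text{old}}\le U_{1:T}(v;v).
\]
Induction on $t$ and on nodes then preserves IC and PIR. At $t=T$ I need $\gamma_{T+1}$ or a last round; here I would simply not defer at round $T$. Lowering the price to $p_{\min}$ at round $T$ is then exactly the concern, and I would argue separately that the final-round deviation to $w$ is already covered by IC.

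\textbf{Best response and revenue.} In $\cI$, every sequence of menu choices corresponds to a tree path, and each tree path coincides (in all rounds the agent stays) with the adjusted contract of some report $\hatv$ in the leaf set. Dropping out corresponds to taking $(0,0)$ thereafter, which is no better than truncation by PIR. So any strategy yields at most $\max_{\hatv,s}U_{1:s}(\hatv;v)$, and by IC/PIR of the adjusted mechanism the truthful path attains the maximum. Ties are broken in the seller's favor, or we observe that any best response gives equal utility.

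For revenue, each deferral moves $\Delta$ of payment at round $t$ into $\frac{\gamma_t}{\gamma_{t+1}}\Delta\ge\Delta$ at round $t+1$, since $\gamma_t\ge\gamma_{t+1}$. The undiscounted sum $\sum_t p_t(v)$ therefore weakly increases at every step. This gives $\Rev(v,\cI)\ge\Rev(v,\cD)$ for all $v$.
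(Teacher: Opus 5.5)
Your proposal is correct and follows essentially the same route as the paper's proof. One equalization step leaves every full-horizon utility $U_{1:T}$ unchanged and alters partial utilities only at the adjusted round, and that round's deviation is dominated by the minimizing type at the same node. The paper works with an infimum and an $\epsilon$-approximate minimizer in case the minimum is not attained, and you should do the same rather than assume the minimum exists. Your explicit best-response and undiscounted-revenue arguments fill in steps the paper only asserts. Your round-$T$ remark is worth stating exactly as you hint: two types at the same depth-$T$ node with the same final allocation differ only in $p_T$, so IC forces their final prices to coincide and the last equalization step is vacuous and costs no revenue. The paper's proof treats only $t<T$ and leaves this case implicit.
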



The conversion from direct to indirect is more involved than the reverse direction (which follows directly from the revelation principle). The core difficulty is the following: in a direct mechanism, the seller knows the buyer's reported type, so she can charge \emph{different payments to different types even when they receive the same allocation}. In an indirect mechanism, the seller never learns the buyer's type explicitly---she can only post a menu of (allocation, payment) pairs, and every buyer facing the same menu must pay the same price for the same allocation. We introduce a procedure (\Cref{alg:menu-direct-to-indirect}) that converts any direct mechanism into an indirect one. Note that in the procedure, for direct mechanisms, payments $p_t(\hat{v})$ are indexed by the reported type, whereas in the indirect mechanism, payments $p(a)$ depend only on the chosen allocation---this asymmetry is precisely what makes the conversion non-trivial.

The procedure essentially shows how to ``unfold'' any direct mechanism into a \emph{tree-structured} indirect mechanism. Each node in the tree corresponds to a particular interaction history, and branches correspond to the buyer's allocation choices. At each node we track: \begin{enumerate} \item \textbf{Candidate set} $\attr{node}{values}$: this is the set of types which would have reached this node in the interaction history. At each depth, the sets are disjoint and the union is the full set $[0,1]$. \item \textbf{Menu} $\attr{node}{menu}$: the (allocation, payment) pairs offered at this history. \end{enumerate} The central challenge is \textit{price equalization}. Consider a node whose candidate set contains types $v$ and $v'$ that receive the same allocation $a_t(v) = a_t(v') = a$ but pay different prices $p_t(v) \neq p_t(v')$. Since the indirect mechanism must post a single price for allocation $a$, we cannot directly replicate both contracts. Our solution is to set the menu price to the \emph{minimum} price in each allocation group, \[ p_{\min}(a) = \min_{v \in S(a)} p_t(v), \quad \text{where } S(a) = \{v \in \attr{node}{values} : a_t(v) = a\}, \] and \emph{defer} the remaining payment to the next round. Specifically, for any type $v$ that originally paid more than $p_{\min}(a_t(v))$, we add the difference---adjusted for discounting by the factor $\gamma_t / \gamma_{t+1}$---to that type's payment in round $t+1$. This deferral is feasible because IC guarantees that each type has sufficient continuation utility to absorb the extra charge. After each adjustment step, the buyer's total discounted utility is unchanged, while the seller's revenue weakly increases (since payments are moved earlier in effective terms). Iterating this procedure across all rounds yields the desired indirect mechanism. The complete proof is deferred to Appendix~\ref{app:proof-menu-convert}.

The conversions (\Cref{thm:menu-indirect-to-direct}, \Cref{thm:menu-direct-to-indirect}) between indirect learning mechanisms and direct mechanisms indeed implies that the two classes of mechanisms are equivalent in power. 
\begin{corollary}
Given any direct mechanisms $\cD$ with regret $\Reg(\cD)$, there exists an indirect menu-based learning mechanism with regret $\Reg(\cI)$ no larger than that of the direct one; and vice versa. 
\end{corollary}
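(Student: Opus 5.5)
The corollary is a direct combination of \Cref{thm:menu-indirect-to-direct} and \Cref{thm:menu-direct-to-indirect}. The plan is to translate the pointwise revenue comparisons those theorems give into regret comparisons, using the fact that regret is a supremum over types of a quantity that depends on the mechanism only through revenue.

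First I record that, for any mechanism $\cM$ and any type $v$, the definition of strategic regret gives
\[
\Reg(v;\cM) = T v - \Rev(v;\cM),
\]
where $\Rev(v;\cM)$ is the total payment when a type-$v$ agent best responds. For a direct mechanism that is IC and PIR, the best response is to report truthfully and never drop out, so the realized payment sequence is $(p_t(v))_{t=1}^T$. Hence $\Rev(v;\cD)=\sum_t p_t(v)$ and the same formula applies. Since $Tv$ does not depend on the mechanism, a pointwise inequality $\Rev(v;\cM_1)\ge \Rev(v;\cM_2)$ for all $v$ gives $\Reg(v;\cM_1)\le\Reg(v;\cM_2)$ for all $v$. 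Taking the supremum over $v\in[0,1]$ then gives $\Reg(\cM_1)\le\Reg(\cM_2)$.

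Direct to indirect: given an IC and PIR direct mechanism $\cD$, apply Algorithm~\ref{alg:menu-direct-to-indirect}. By \Cref{thm:menu-direct-to-indirect}, the resulting indirect menu mechanism $\cI$ satisfies $\Rev(v;\cI)\ge\Rev(v;\cD)$ for every $v$. The observation above then yields $\Reg(\cI)\le\Reg(\cD)$.

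Indirect to direct ("vice versa"): given an indirect menu mechanism $\cI$, \Cref{thm:menu-indirect-to-direct} shows that the best-response trajectory map $v\mapsto H(v)$ is an IC and PIR direct mechanism $\cD$ with $\Rev(v;\cD)=\Rev(v;\cI)$ for all $v$. So $\Reg(\cD)=\Reg(\cI)$, and in particular it is no larger.

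The only delicate point is making sure both mechanisms are evaluated under the same behavioral convention. In the indirect mechanism, revenue is defined via the agent's best response, and if several best responses exist one must fix a tie-breaking rule. I would use the same selection that defines $H(v)$ in \Cref{thm:menu-indirect-to-direct} and that is implicit in \Cref{thm:menu-direct-to-indirect}, namely following the path of the truthful type in the constructed tree. With that convention fixed, both the revenue identities and the regret identities are exact, and the corollary follows with no further computation. A consequence is that the optimal regret over the two classes coincides.
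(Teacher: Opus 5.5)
Your proposal is correct and follows the paper's route. The paper obtains the corollary directly from \Cref{thm:menu-indirect-to-direct} and \Cref{thm:menu-direct-to-indirect}, and your identity $\Reg(v;\cM) = Tv - \Rev(v;\cM)$ (so pointwise revenue domination carries over to the supremum over types) is exactly the step the paper leaves implicit. Your tie-breaking convention, that each type follows its own prescribed path in the constructed tree, is also an assumption the paper uses without stating it.
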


Note that while the conversion in \Cref{thm:menu-direct-to-indirect} can strictly \emph{increase} revenue, it does not mean that indirect mechanisms are more powerful; it is simply because the input direct mechanism was suboptimal. The revenue gain comes from deferring early payments to later rounds, which increases revenue while keeping the 
buyer indifferent (his discounted utility is unchanged). An optimally designed direct 
mechanism would already exploit this, leaving no slack for the conversion to improve 
upon. Thus neither class dominates the other, and the optimal regret is identical.

If we focus on the class of (indirect learning) posted-price mechanisms, one can show (via the same procedure) that they are equivalent to direct revelation mechanisms \emph{with binary allocation outcomes}. 
\begin{corollary}
    The conversions (\Cref{thm:menu-indirect-to-direct}, \Cref{thm:menu-direct-to-indirect}) holds with ``indirect menu-based learning mechanisms" replaced by ``posted-price mechanisms", and ``direct mechanisms" by ``direct mechanisms with binary allocation outcomes". 
\end{corollary}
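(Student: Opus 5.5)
The statement is the posted-price/binary-allocation analogue of the two conversion theorems, so the plan is to rerun both proofs and check that every step preserves the binary structure.

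\textbf{Posted-price to direct with binary allocations.} This follows the revelation-principle argument of \Cref{thm:menu-indirect-to-direct}. Given a posted-price mechanism, I would define for each type $v$ the best-response trajectory $H(v)=(a_t(v),p_t(v))_{t=1}^T$, where $a_t(v)\in\{0,1\}$ by construction. A rejection corresponds to the contract $(0,0)$ and an acceptance to $(1,p_t)$. The payment recorded in $H(v)$ is charged only on acceptance, so $a_t(v)=0$ implies $p_t(v)=0$.

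\begin{itemize}
\item \emph{IC.} A report $\hatv$ corresponds to type $v$ imitating the path of $\hatv$, which is a feasible strategy in the indirect game. It is therefore weakly worse for $v$ than $v$'s own best response.
\item \emph{PIR.} Rejecting in every round from $t+1$ onward is feasible in the posted-price game and yields zero continuation utility. So imitating $\hatv$ up to round $t$ and then rejecting forever gives exactly $U_{1:t}(\hatv;v)$. Optimality of $v$'s strategy then gives $U_{1:T}(v;v)\ge U_{1:t}(\hatv;v)$.
\end{itemize}

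Revenue is identical by construction.

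\textbf{Direct with binary allocations to posted-price.} I would run \Cref{alg:menu-direct-to-indirect} unchanged. Since $a_t(v)\in\{0,1\}$, the set $A$ at every node is contained in $\{0,1\}$. Each node's menu is thus $\{(0,p_{\min}(0)),(1,p_{\min}(1))\}$, with $p_{\min}(0)$ forced to $0$ by the algorithm. This is exactly a posted price $p_{\min}(1)$ with reject option $(0,0)$, and the tree has at most two children per node, i.e.\ a binary decision tree.

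The step needing care is that price equalization is still consistent with a binary structure and still yields a valid mechanism. Setting $p_t(v)=0$ for $v\in S(0)$ may defer a possibly nonzero payment $p_t(v)$ into round $t+1$. The allocations are never altered, so the binary property persists. The remaining argument is the proof of \Cref{thm:menu-direct-to-indirect}, which I take as given:
\begin{itemize}
\item the buyer's discounted utility is invariant under each deferral;
\item revenue weakly increases, since $\gamma_t/\gamma_{t+1}\ge 1$;
\item IC and PIR together guarantee that each type best-responds by following its own branch, with no type preferring to drop out.
\end{itemize}
None of these uses non-binary allocations.

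The only possible obstacle is a deferred amount at round $T$, where there is no round $T+1$ to absorb it. I would handle this exactly as the general proof does, since binarity plays no role in that step. Combining both directions gives the corollary.
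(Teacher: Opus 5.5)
Your route is the paper's own (the paper only says the corollary holds ``via the same procedure''), and the posted-price-to-direct half is fine. The gap is in the step you singled out. A deferral weakly increases revenue only if the deferred amount $p_t(v)-p_{\min}(a_t(v))$ is nonnegative. For $a=1$ this holds by definition of the minimum. For $a=0$, however, the algorithm overrides $p_{\min}(0)$ to $0$, so the deferred amount is $p_t(v)$ itself, and neither IC nor PIR rules out $p_t(v)<0$, i.e.\ a subsidy in a round with no allocation. A negative deferral changes undiscounted revenue by $|p_t(v)|\,(1-\gamma_t/\gamma_{t+1})\le 0$, so ``since $\gamma_t/\gamma_{t+1}\ge 1$'' argues in the wrong direction. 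Concretely, take $T=2$, $\gamma_1>\gamma_2$, $c>0$, and map every report to $(a_1,p_1)=(0,-c)$, $(a_2,p_2)=(1,0)$. This mechanism is IC and PIR with revenue $-c$. The algorithm outputs a reject-only first round and a posted price $-c\gamma_1/\gamma_2$ in round 2, so every type pays strictly less. IC and PIR do survive such a deferral: reporting $v$ and stopping after round $t$ now yields $U_{1:t-1}(v;v')$, which the original constraint already bounds. Only the revenue claim fails. The paper's proof of Theorem~\ref{thm:menu-direct-to-indirect} has the same omission.

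Second, there is no round-$T$ argument in the general proof for you to copy, since it only treats adjustments at $t<T$. The missing observation is as follows. All types at a depth-$T$ node share their equalized history, so IC of the mechanism adjusted through round $T-1$ forces a common round-$T$ price within each allocation group. PIR then forces the common price of the $a=0$ group to be $\le 0$. If that price is negative, it can neither be deferred nor simply zeroed, because zeroing can push some types off their prescribed branch and break the path-following argument. Both gaps close once payments are required to be nonnegative, as the paper's menu notation $p:[0,1]\mapsto\mathbb{R}^+$ suggests. Every deferred amount is then nonnegative. At round $T$ the $a=0$ group pays exactly $0$ and each $a=1$ group already pays a common price, so no adjustment is needed there. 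You should state this assumption explicitly. Finally, when $A$ is a singleton the algorithm's menu has one entry, so a genuine posted-price node needs the missing accept or reject branch added. Give that branch an off-path continuation such as prohibitive prices forever, which PIR makes unprofitable.
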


\subsection{Lower Bound on Regret}
\label{subsec:lower-bound}
In \citet{amin2013learning}, they prove a lower bound of $\Omega(T_\gamma)$ for non-myopic agents in the posted-price mechanism setting. We show this lower bound still holds for menu-based mechanisms. Hence, our $O(T_\gamma)$ upper bound is tight up to constant factors. Our proof clearly pinpoints how the regret is measuring a \emph{fundamental trade-off} between allocation inefficiencies for lower-type agents and information rents paid to higher-type agents. We defer the full analysis to Appendix~\ref{sec:proof-lower-bound}. 

\begin{theorem}
The regret for indirect menu-based learning mechanisms (and equivalently, direct revelation mechanisms) can be lower bounded by $\Omega(T_\gamma)$. 
\end{theorem}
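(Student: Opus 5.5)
By \Cref{thm:menu-indirect-to-direct}, every indirect menu mechanism yields a direct mechanism $v\mapsto(a_t(v),p_t(v))_{t=1}^T$ that is IC and PIR and has the same revenue. It therefore suffices to lower bound the regret of an arbitrary IC and PIR direct mechanism. We use a two-type argument: a low type $v_L=1/2$ and a high type $v_H=1$. Intuitively the seller faces a dilemma. If the low type gets little total allocation, regret against the low type is large. If the low type gets a lot of allocation, the high type can mimic it, so the high type must be left a large information rent, which again means large regret.

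\textbf{Step 1 (low type).} Write $p_t=p_t(v_L)$ and $a_t=a_t(v_L)$. PIR at $t=T$ gives $\sum_t \gamma_t(a_t v_L - p_t)\ge 0$. Since $\gamma_t$ is nonincreasing, this controls discounted payments but not undiscounted ones, so we use a cruder bound. Any payment round-by-round satisfies $\sum_t p_t\le T$ trivially, and we simply write
\[
\Reg(v_L)=Tv_L-\sum_t p_t\ \ge\ \sum_t (v_L a_t-p_t)+v_L\sum_t(1-a_t).
\]

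\textbf{Step 2 (high type).} IC for $v_H$ against the report $v_L$ with full participation gives
\[
U(v_H;v_H)\ge \sum_t\gamma_t(a_t-p_t)=U(v_L;v_L)+(v_H-v_L)\sum_t\gamma_t a_t\ \ge\ \tfrac12\sum_t\gamma_t a_t .
\]
Because $U(v_H;v_H)=\sum_t\gamma_t(a_t(v_H)-p_t(v_H))\le \gamma_1\sum_t(1-p_t(v_H))$ when $a_t\le1$, and $\gamma_1\le1$, we get $\Reg(v_H)\ge U(v_H;v_H)\ge \tfrac12\sum_t\gamma_t a_t$.

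\textbf{Step 3 (the tradeoff).} Now split on the low type's allocation in the first $K:=\lceil T_\gamma\rceil$ rounds, where the discount weights are largest. Since $\gamma_t$ is nonincreasing, $\sum_{t\le K}\gamma_t\ge \min(1,K/T)\,T_\gamma\cdot(\text{const})$. In the geometric case the first $\Theta(T_\gamma)$ rounds carry a constant fraction of $T_\gamma$, and each weight is at least a constant.

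If $\sum_{t\le K}a_t\ge K/2$, the high type's rent is $\Omega(\sum_{t\le K}\gamma_t)=\Omega(T_\gamma)$.

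Otherwise the low type loses allocation $\ge K/2$ in those rounds. For this to cost regret, we must turn Step 1 into a bound $\Reg(v_L)\ge v_L\sum_t(1-a_t)-(\text{slack})$. This needs per-round surplus $\sum_t(v_La_t-p_t)$ to be bounded below by roughly $0$ in undiscounted form. PIR alone gives only the discounted version.

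\textbf{Main obstacle.} The main difficulty is exactly this gap: the seller might charge the low type less early and more late. The first fix to try is to apply the price-equalization reasoning behind \Cref{thm:menu-direct-to-indirect} in reverse: shifting payments later only lowers the seller's undiscounted revenue relative to the discounted constraint. We would also use PIR at every prefix $t$, via an Abel summation with the nonincreasing $\gamma_t$, to show $\sum_{\tau\le t}(v_La_\tau-p_\tau)$ cannot be very negative on the relevant rounds.

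If this proves delicate, the fallback is to restrict to the first $K$ rounds as a subgame and rescale so that $\gamma_t\ge c$ there. Then the discounted and undiscounted sums agree up to constants, and both branches yield $\Omega(T_\gamma)$. The remaining regret from later rounds is nonnegative, since we may assume $p_t\le v$ without loss by IR. This gives $\Reg\ge\max(\Reg(v_L),\Reg(v_H))=\Omega(T_\gamma)$.
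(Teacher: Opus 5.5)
Your reduction to direct mechanisms and your Steps 1--2 are the paper's: IC of the high type against the low report, plus the low type's PIR, gives rent at least $(v_H-v_L)\sum_t\gamma_t a_t(v_L)$. The proof is nevertheless incomplete. The step you flag as the main obstacle, the undiscounted bound $\sum_t p_t(v)\le v\sum_t a_t(v)$, is the one nontrivial ingredient, and you leave it open. The missing observation is that PIR is a family of \emph{continuation} (suffix) constraints. Set $x_\tau:=v\,a_\tau(v)-p_\tau(v)$ and $y_s:=\sum_{\tau\ge s}\gamma_\tau x_\tau$. PIR says $y_s\ge 0$ for every $s$, and these constraints forbid exactly the back-loading you worry about. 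Abel summation (with $y_{T+1}:=0$) then finishes:
\[
\sum_{t=1}^T x_t=\sum_{t=1}^T\frac{y_t-y_{t+1}}{\gamma_t}=\frac{y_1}{\gamma_1}+\sum_{t=2}^{T}y_t\Bigl(\frac{1}{\gamma_t}-\frac{1}{\gamma_{t-1}}\Bigr)\ \ge\ \frac{y_1}{\gamma_1}\ \ge\ 0 .
\]
This is Lemma~\ref{lemma:non-discounted-payment}, which the paper phrases via weights $c_s$ with $\gamma_s\sum_{t\le s}c_t=1$.

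None of your sketched fixes reaches it. The price-equalization idea has the direction reversed. Holding discounted utility fixed, deferring a payment multiplies it by $\gamma_t/\gamma_{t+1}\ge 1$, so it \emph{raises} undiscounted revenue; this is why Algorithm~\ref{alg:menu-direct-to-indirect} weakly increases revenue. The ``prefix'' version aims at the wrong quantity. Prefix sums $\sum_{\tau\le t}x_\tau$ can be very negative under PIR, since front-loading is allowed; only the total is controlled. The fallback is unsound. In the first $K$ rounds the low type's surplus can be negative (charge above value early, compensate after round $K$), so the truncated game has no participation constraint of its own. The claim ``$p_t\le v$ without loss by IR'' is false for the same reason. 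Nonnegativity of the regret after round $K$ is true, but proving it again needs the display above, applied from round $K+1$.

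There are two further points. First, in Step 2 the inequality $U(v_H;v_H)\le\gamma_1\sum_t(1-p_t(v_H))$ does not follow from $a_t\le 1$ and $\gamma_1\le 1$ alone. The terms $1-p_t(v_H)$ can be negative in some rounds: take $\gamma=(1,\tfrac12)$, $a=(1,1)$, $p=(0,2)$. The inequality does hold under PIR, again by the display, which gives $y_1\le\gamma_1\sum_t x_t$.

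Second, once the lemma is available, your $K$-split in Step 3 is unnecessary, and it is where generality is lost. With $K=\lceil T_\gamma\rceil$, the first branch needs $K\gamma_K=\Omega(T_\gamma)$, because the allocation can sit on the smallest of the first $K$ weights. This holds for geometric discounting but not for general nonincreasing $\gamma_t$. The bound you cite, $\sum_{t\le K}\gamma_t\ge (K/T)T_\gamma$, is only about $T_\gamma^2/T$. The paper needs no split. Using $\gamma_t\le 1$, it gets $\Reg(v_L)\ge v_L\sum_t\gamma_t(1-a_t(v_L))$ and $\Reg(v_H)\ge (v_H-v_L)\sum_t\gamma_t a_t(v_L)$. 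Hence $\Reg(v_L)/v_L+\Reg(v_H)/(v_H-v_L)\ge T_\gamma$, and so $\max(\Reg(v_L),\Reg(v_H))\ge T_\gamma\,v_L(v_H-v_L)/v_H$, which is $T_\gamma/4$ for your types.
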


\section{Conclusion}
\label{sec:discussion}
In this work, we have shown two main results. For our first result, we show that menu-based mechanisms break the $\Omega(\log \log T)$ barrier for the dynamic pricing problem, achieving regret $O(T_\gamma)$, matching the lower bound up to constant factors. Our second result is a fundamental equivalence between learning-based indirect mechanism and revelation-based direct mechanism for this problem. We mention two directions for future research. The first is to characterizing when direct mechanisms can be implemented indirectly. While we showed that any IC and PIR direct mechanism can be converted to an indirect mechanism in our setting, the general question remains: {what structural properties of a mechanism design problem enable this conversion?} The second would be to extend our results to multiple buyers and richer type spaces. In particular, does our equivalence results still hold for multiple buyers or multi-dimension type spaces? 




\newpage

\bibliographystyle{plainnat}
\bibliography{references}  

\newpage
\appendix

\section{Proof for \Cref{sec:two-item-menu}}
\label{sec:proof-menu-tight}

\begin{proof}[Proof of \Cref{lem:phase-consistency}]
We prove the first statement; the second is symmetric.  Suppose
$v\ge m_t+\delta_t/8$.  Then choosing $L_t$ instead of $H_t$ in any round of
the phase causes current utility loss
\[
    U_{H_t}(v)-U_{L_t}(v)
    =
    \delta_t(v-m_t)
    \ge
    \frac{\delta_t^2}{8}.
\]
Let
\[
    \Gamma_{\mathrm{phase}}
    :=
    \sum_{i=0}^{D_t-1}\gamma^i
    =
    \frac{1-\gamma^{D_t}}{1-\gamma}
\]
and
\[
    \Gamma_{\mathrm{tail}}
    :=
    \sum_{i=D_t}^{\infty}\gamma^i
    =
    \frac{\gamma^{D_t}}{1-\gamma}.
\]
If the buyer falsely chooses $L_t$ consistently throughout the phase, his
discounted utility loss during the phase is at least
\[
    \frac{\delta_t^2}{8}\Gamma_{\mathrm{phase}}.
\]

We next upper bound the possible continuation gain from being placed in the
wrong branch.  Since values, allocations, and payments are bounded, the
buyer's per-round continuation utility can change by at most an absolute
constant $C$.  Hence the total discounted continuation gain from forcing the
wrong branch is at most
\[
    C\Gamma_{\mathrm{tail}}.
\]

By the definition of $D_t$,
\[
    \gamma^{D_t}
    \le
    \frac{\rho\delta_t^2}{1+\rho\delta_t^2}.
\]
Therefore
\[
\begin{aligned}
    \Gamma_{\mathrm{tail}}
    =
    \frac{\gamma^{D_t}}{1-\gamma}
    &\le
    \rho\delta_t^2
    \frac{1-\gamma^{D_t}}{1-\gamma}  \\
    &=
    \rho\delta_t^2\Gamma_{\mathrm{phase}}.
\end{aligned}
\]
Thus the continuation gain from false branching is at most
\[
    C\rho\delta_t^2\Gamma_{\mathrm{phase}}.
\]
Choosing $\rho\le 1/(8C)$ makes this no larger than the current utility loss
from choosing $L_t$ rather than $H_t$ throughout the phase.  Hence false
branching is not profitable.
\end{proof}

\begin{proof} [Proof of \Cref{lem:phase-regret}]
Fix a phase with interval $I_t=[\underv_t,\overv_t]$ and width $\delta_t$.
Since $v\in[\underv_t,\overv_t]$, the full-information benchmark revenue in
any round is at most
\[
    v\le \underv_t+\delta_t.
\]
The low menu option generates revenue
\[
    (1-\delta_t)\underv_t
    \ge
    \underv_t-\delta_t,
\]
and the high menu option generates revenue
\[
    \underv_t+\delta_t^2/2
    \ge
    \underv_t.
\]
Thus every nonzero menu choice generates revenue at least
$\underv_t-\delta_t$, and the seller's per-round regret is $O(\delta_t)$.
Since the phase lasts $D_t$ rounds, the total regret in the phase is
$O(\delta_t D_t)$.
\end{proof}

\begin{proof} [Proof of \Cref{thm:menu-regret-tight}]
Let $\delta_k$ be the interval width at the beginning of phase $k$.  Since each
phase shrinks the interval by a factor at most $5/8$, we have
\[
    \delta_k \le \beta^k,
    \qquad
    \text{where } \beta:=5/8.
\]
By Lemma~\ref{lem:phase-regret} and the definition of $D_t$, the regret of
phase $k$ is
\[
    O\left(
    \delta_k T_\gamma \log\frac{1}{\delta_k}
    \right).
\]
Therefore the total regret over all phases is bounded by
\[
\begin{aligned}
    \sum_{k\ge 0}
    O\left(
    \delta_k T_\gamma \log\frac{1}{\delta_k}
    \right)
    &\le
    O(T_\gamma)
    \sum_{k\ge 0}
    \beta^k
    \log\frac{1}{\beta^k} \\
    &=
    O(T_\gamma)
    \log\frac{1}{\beta}
    \sum_{k\ge 0} k\beta^k \\
    &=
    O(T_\gamma).
\end{aligned}
\]
The last equality follows from
\[
    \sum_{k\ge 0} k\beta^k
    =
    \frac{\beta}{(1-\beta)^2}
    <\infty.
\]
Thus the total regret is $O(T_\gamma)$.
\end{proof}

\section{Proof of \Cref{sec:menu-equiv}}
\label{app:proof-menu-convert}

\begin{proof} [Proof of \Cref{thm:menu-indirect-to-direct}]
Fix any true type $v$, any potential misreport $\hat{v}$, and any round $t \in \{1, \ldots, T\}$. We must show:
\[
U_{1:T}(v; v) \geq U_{1:t}(\hat{v}; v).
\]

Consider the indirect menu mechanism. At each round $\tau$ and history $h_\tau$, the principal posts a menu $M_\tau(h_\tau) = \{(a, p_\tau(a; h_\tau)) : a \in [0,1]\}$. 

By construction, $(a_\tau(v), p_\tau(v))_{\tau=1:T}$ is the trajectory generated when a type-$v$ agent plays their best response in the indirect mechanism. This means at each round $\tau$, given the history $h_\tau(v)$ generated by type $v$'s play through round $\tau-1$, the agent selects allocation $a_\tau(v)$ from menu $M_\tau(h_\tau(v))$ and pays $p_\tau(v) = p_\tau(a_\tau(v); h_\tau(v))$. 

Now consider an alternative strategy in the indirect mechanism where the type-$v$ agent:
\begin{enumerate}
\item Mimics type $\hat{v}$'s behavior for rounds $1$ through $t$: at each round $\tau \leq t$, follows history $h_\tau(\hat{v})$ and selects allocation $a_\tau(\hat{v})$ from menu $M_\tau(h_\tau(\hat{v}))$, paying $p_\tau(\hat{v})$.
\item Drops out at round $t+1$ by selecting the outside option $(0,0)$ in all subsequent rounds.
\end{enumerate}

This deviation strategy yields utility exactly $U_{1:t}(\hat{v}; v)$ to the type-$v$ agent.

Since $(a_\tau(v), p_\tau(v))_{\tau=1:T}$ is generated by the agent's {best response} in the indirect mechanism, it must weakly dominate all alternative strategies, including the deviation described above:
\[
U_{1:T}(v; v) \geq U_{1:t}(\hat{v}; v).
\]

This holds for all $v$, all $\hat{v}$, and all $t \in \{1, \ldots, T\}$, establishing that the direct mechanism satisfies IC and PIR.
\end{proof}

\begin{proof} [Proof of \Cref{thm:menu-direct-to-indirect}]
The bulk of the proof is showing the mechanism still satisfies IC and PIR and each adjustment step. Then in the final indirect mechanism, each type would exactly take the prescribed path. 

Consider a single adjustment at round $t < T$ where we set $p_t(v) \leftarrow p_{\min}(a_t(v))$ and $p_{t+1}(v) \leftarrow p_{t+1}(v) + \frac{\gamma_t}{\gamma_{t+1}}(p_t^{\text{old}}(v) - p_{\min}(a_t(v)))$, where $p_t^{\text{old}}(v)$ denotes the value before adjustment.

Recall that the utility for a type-$v'$ agent reporting $\hat{v}$ and participating through round $\tau$ is:
\[
U_{1:\tau}(\hat{v}; v') = \sum_{s=1}^{\tau} \left[\gamma_s a_s(\hat{v}) \cdot v' - \gamma_s p_s(\hat{v})\right].
\]

We must verify that the adjusted mechanism still satisfies IC and PIR: for all types $v'$, all reports $\hat{v}$, and all rounds $\tau$,
\[
U_{1:T}(v'; v') \geq U_{1:\tau}(\hat{v}; v').
\]

\textbf{Part (i). Utility for truthful reporting for type $v$ is unchanged. } For a type-$v$ agent reporting $v$ and participating through all $T$ rounds, the utility change from the adjustment is:
\begin{align*}
\Delta U_{1:T}(v; v) &= \left[-\gamma_t p_{\min}(a_t(v)) - \gamma_{t+1}\left(p_{t+1}(v) + \frac{\gamma_t}{\gamma_{t+1}}(p_t^{\text{old}}(v) - p_{\min}(a_t(v)))\right)\right] \\
&\quad - \left[-\gamma_t p_t^{\text{old}}(v) - \gamma_{t+1} p_{t+1}(v)\right] \\
&= -\gamma_t p_{\min}(a_t(v)) - \gamma_t(p_t^{\text{old}}(v) - p_{\min}(a_t(v))) + \gamma_t p_t^{\text{old}}(v) \\
&= 0.
\end{align*}

Therefore, $U_{1:T}(v; v)$ remains unchanged. The allocation benefits $\sum_{s=1}^T \gamma_s a_s(v) \cdot v$ are unaffected, and the payment stream is adjusted to preserve its present value.

\textbf{Part (ii). Deviations remain suboptimal. }

Next, We need to verify that for any type $v'$ and any dropout round $\tau$, the constraint $U_{1:T}(v'; v') \geq U_{1:\tau}(v; v')$ continues to hold after adjusting $v$. 



We need to check how $U_{1:\tau}(v; v')$ changes. This depends on whether $\tau < t$, $\tau = t$, or $\tau \ge t+1$. 

\begin{itemize}

\item If $\tau < t$. Since we are adjusting price in round $t$ and the next round, dropping out before round $t$ have the exact same utility as before the adjustment. If the original mechanism were IC and PIR, dropping out before round $t$ would not benefit the agent. 



\item If $\tau = t$. We must show that a type-$v'$ agent cannot benefit from reporting $v$ and dropping out at round $t$ after the price adjustment. To be precise, since $S(a)$ may be infinite, the minimum may not be attained, and we define \[ p_{\inf}(a) := \inf_{w \in S(a)} p_t(w). \] Fix any $\epsilon > 0$. By definition of infimum, there exists $v^\sharp \in S(a)$ such that \[ p_t(v^\sharp) \leq p_{\inf}(a) + \epsilon. \] Consider a type-$v'$ agent who reports $v$ and drops out at round $t$ after the adjustment. His utility from this deviation is: \[ U_{1:t}^{\text{new}}(v; v') = U_{1:t-1}(v; v') + \gamma_t [a_t(v) \cdot v' - p_{\inf}(a_t(v))]. \] Now, in the \emph{original} mechanism (before adjustment), the agent could have reported $v^\sharp$ and dropped out at round $t$, obtaining: \[ U_{1:t}^{\text{orig}}(v^\sharp; v') = U_{1:t-1}(v^\sharp; v') + \gamma_t [a_t(v^\sharp) \cdot v' - p_t(v^\sharp)]. \] Since $v^\sharp \in S(a)$, we have $a_t(v^\sharp) = a_t(v) = a$, and since $v, v^\sharp$ both reach the same node (i.e., have the same interaction history up to round $t$), we have $U_{1:t-1}(v; v') = U_{1:t-1}(v^\sharp; v')$. Therefore: \[ U_{1:t}^{\text{new}}(v; v') = U_{1:t}^{\text{orig}}(v^\sharp; v') + \gamma_t [p_t(v^\sharp) - p_{\inf}(a)] \leq U_{1:t}^{\text{orig}}(v^\sharp; v') + \gamma_t \epsilon. \] By IC and PIR of the original mechanism: \[ U_{1:T}(v'; v') \geq U_{1:t}^{\text{orig}}(v^\sharp; v'). \] Combining: \[ U_{1:T}(v'; v') \geq U_{1:t}^{\text{new}}(v; v') - \gamma_t \epsilon. \] Since this holds for all $\epsilon > 0$, taking $\epsilon \to 0$ yields: \[ U_{1:T}(v'; v') \geq U_{1:t}^{\text{new}}(v; v'), \] as desired.

\item If $\tau > t$. By our adjustment step, the utility for an agent remains unchanged after proceeding beyond round $t+1$, since the adjustment had accounted for discounting. Hence any deviation still would not benefit the agent. 
\end{itemize}
\end{proof}

\section{Proof of Lower Bound}
\label{sec:proof-lower-bound}

Consider the case with only two 
types $0 < \underline{v} < \overline{v}$. In general, for a truthful direct mechanism, the following principles hold:
\begin{itemize}
    \item \textbf{Lower type:} Receives an inefficient 
    allocation (the item is withheld in certain rounds), 
    but the seller need not pay information rent to the 
    buyer. 
    
    \item \textbf{Higher type:} Receives efficient 
    allocation (the item is allocated in every round), 
    but the buyer extracts information rent from the 
    seller.
\end{itemize}
The above principal generally holds for usual adverse selection problems where the agent's type is one-dimensional, and a more detailed discussion can be found in ~\cite{salanie2005economics}. 
To understand why, consider the incentive compatibility constraint: the higher-type agent $\overline{v}$ must weakly prefer reporting truthfully over misreporting as type $\underline{v}$. In a first-best scenario, the item would always be allocated and the agent always charged their true valuations. Such is not possible in the second-best scenario when the agent's type is unknown. The seller must adopt the following to ensure incentive compatibility: 

\begin{enumerate}
    \item \emph{Withhold allocation to the lower type report}: By withholding the item from type 
    $\underline{v}$ in certain rounds (setting 
    $a_t(\underline{v}) < 1$ for some $t$), the seller makes this allocation less attractive to the higher type. If the item is always allocated at the lower type, he cannot be made to pay more than his true valuation, and hence the higher type should deviate to the lower type and be sure to enjoy the surplus. 
    
    \item \emph{Leave information rent to the higher type}: 
    The seller can give the higher type a utility premium 
    beyond his participation constraint. If the higher type is left with no information rent, he would of course again deviate to reporting the lower type. The lower type would not have this issue, since deviating upward would generally net him no gain. 
\end{enumerate}
We demonstrate that the regret lower bound $\Omega(T_\gamma)$ measures a fundamental tradeoff between the two quantities. 

We state the lower bound when there are two possible types. The result of course extends to richer type spaces (e.g., $v\in [0,1]$), since the mechanism can only have more restrictions. 

\begin{theorem}\label{thm:lower-bound}
Suppose there are only two possible types $\underline{v}$ 
and $\overline{v}$. The seller's strategic regret satisfies
\[
\mathrm{Reg} \geq T_\gamma \cdot 
\frac{\underline{v}(\overline{v} - \underline{v})}{\overline{v}} 
= \Omega(T_\gamma).
\]
\end{theorem}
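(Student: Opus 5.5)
By the corollary to \Cref{thm:menu-indirect-to-direct} and \Cref{thm:menu-direct-to-indirect}, it suffices to prove the bound for an arbitrary IC and PIR direct mechanism $(a_t(\cdot),p_t(\cdot))_{t=1}^T$ on the two types. I would write $\ell=\underline{v}$ and $h=\overline{v}$, and denote the per-round utilities by $u_t(\hat v;v)=a_t(\hat v)v-p_t(\hat v)$. Throughout I assume $\gamma_t\le 1$, which holds for the geometric sequence of \Cref{sec:menu}. The plan is to lower bound $\Reg(\ell)$ by the low type's allocative inefficiency and $\Reg(h)$ by the high type's information rent. I then balance the two bounds through the single quantity
\[
x:=\sum_{t=1}^T \gamma_t a_t(\ell)\in[0,T_\gamma].
\]

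\textbf{Step 1: from discounted PIR to undiscounted revenue.} Regret is undiscounted, but the constraints are discounted, so this step needs care. PIR, in the form $U_{1:T}(v;v)\ge U_{1:t}(v;v)$ for every $t$, says that every discounted tail is nonnegative:
\[
S_t(v):=\sum_{s\ge t}\gamma_s u_s(v;v)\ge 0 .
\]
Writing $\sum_t u_t=\sum_t \gamma_t^{-1}\gamma_t u_t$ and applying Abel summation with the nondecreasing weights $1/\gamma_t$ gives
\[
\sum_{t=1}^T u_t(v;v)=\frac{S_1(v)}{\gamma_1}+\sum_{t\ge2}\Bigl(\frac{1}{\gamma_t}-\frac{1}{\gamma_{t-1}}\Bigr)S_t(v)\;\ge\; \frac{S_1(v)}{\gamma_1}\;\ge\; S_1(v)=U_{1:T}(v;v).
\]
This is the step I expect to be the main obstacle. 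It is exactly where the monotonicity $\gamma_1\ge\cdots\ge\gamma_T$ and the per-round dropout right are used: without PIR, payments could be back-loaded, as in the debt-contract footnote.

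\textbf{Step 2: the two regret bounds.} For the low type, Step 1 gives $\sum_t p_t(\ell)\le \ell\sum_t a_t(\ell)$. Since $\gamma_t\le1$,
\[
\Reg(\ell)\ge \ell\sum_t\bigl(1-a_t(\ell)\bigr)\ge \ell\sum_t\gamma_t\bigl(1-a_t(\ell)\bigr)=\ell\,(T_\gamma-x).
\]
For the high type, IC with $t=T$ together with $U_{1:T}(\ell;\ell)\ge0$ (PIR at $t=0$) gives the information rent
\[
U_{1:T}(h;h)\ge U_{1:T}(\ell;h)=U_{1:T}(\ell;\ell)+(h-\ell)x\ge (h-\ell)x .
\]
Step 1 then yields $\sum_t p_t(h)\le h\sum_t a_t(h)-(h-\ell)x\le hT-(h-\ell)x$. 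Hence $\Reg(h)\ge (h-\ell)x$.

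\textbf{Step 3: balancing.} We now have $\Reg\ge\max\{\ell(T_\gamma-x),\,(h-\ell)x\}$. Bound the max below by the convex combination with weights $\lambda=(h-\ell)/h$ and $1-\lambda=\ell/h$:
\[
\lambda\,\ell(T_\gamma-x)+(1-\lambda)(h-\ell)x=\frac{(h-\ell)\ell}{h}T_\gamma .
\]
The $x$ terms cancel, which gives the stated bound uniformly over mechanisms. Fixing, for example, $\ell=1/2$ and $h=1$ makes the constant absolute, so the bound is $\Omega(T_\gamma)$. For the continuum type space $[0,1]$, restricting attention to these two types only removes constraints from the adversary's side, so the bound persists.
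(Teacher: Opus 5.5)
Your proposal is correct and follows the paper's proof essentially step for step: the same quantity $x=\sum_t\gamma_t a_t(\underline{v})$, the same low-type inefficiency bound $\underline{v}(T_\gamma-x)$ and high-type rent bound $(\overline{v}-\underline{v})x$, and the same balancing (your convex combination is the paper's weighted-harmonic-mean step). Your Abel summation is exactly the paper's Lemma~\ref{lemma:non-discounted-payment}, whose weights are $c_1=1/\gamma_1$ and $c_s=1/\gamma_s-1/\gamma_{s-1}$; the one improvement is that you keep the term $S_1(v)/\gamma_1\ge U_{1:T}(v;v)$, which justifies that the high type's discounted information rent lower-bounds his undiscounted regret (a conversion the paper's Step 2 asserts without proof), and you state explicitly the assumption $\gamma_t\le 1$ that the paper uses silently.
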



\szcomment{We need to modify the proof so payment can be allowed even when no transfer. I.e., fractional so that it works for menu based. Update: I think this is done. }
\begin{proof}
We break the proof into several steps. 
\paragraph{Step 1: Bounding payments from the lower type.}
Since the lower-type agent participates voluntarily 
(periodic individual rationality), we have
\[
\sum_{t=1}^T \gamma_t [a_t(\underline{v}) 
\underline{v} - p_t(\underline{v})] \geq 0.
\]
Rearranging yields
\begin{equation}\label{eq:lower-type-payment}
\sum_{t=1}^T \gamma_t p_t(\underline{v}) 
\leq \sum_{t=1}^T \gamma_t a_t(\underline{v}) \underline{v}.
\end{equation}

\paragraph{Step 2: Information rent for the higher type.}
The utility that a higher-type agent obtains by 
misreporting to $\underline{v}$ (and never dropping out) is
\begin{equation}\label{eq:deviation-utility}
\sum_{t=1}^T \gamma_t [a_t(\underline{v}) 
\overline{v} - p_t(\underline{v}) ]
\geq \sum_{t=1}^T \gamma_t [ a_t(\underline{v}) 
\overline{v} - a_t(\underv)\underline{v} ],
\end{equation}
where the inequality follows 
from~\eqref{eq:lower-type-payment}.

By incentive compatibility, the principal must provide 
the higher type with information rent at least equal to 
this deviation utility. Define
\[
\mathrm{Reg}(\overline{v}) := \sum_{t=1}^T \gamma_t 
\cdot [ {a_t(\underline{v})}  \cdot 
(\overv - \underline{v} ) ],
\]
which lower bounds the regret due to information rent 
paid to the higher type.

\paragraph{Step 3: Allocation inefficiency for the 
lower type.}
The regret from allocating inefficiently to the lower 
type is at least
\begin{align*}
T\underline{v} - \sum_{t=1}^T p_t(\underv) &\ge 
T\underline{v} - \sum_{t=1}^T a_t(\underline{v}) \cdot 
\underline{v} \\
&= \sum_{t=1}^T (1 - a_t(\underline{v})) \cdot 
\underline{v} \\
&\ge \sum_{t=1}^T \gamma_t \cdot 
(1 - a_t(\underv) ) \cdot \underline{v}.
\end{align*}

The first line above is by \Cref{lemma:non-discounted-payment} below. Define
\[
\mathrm{Reg}(\underline{v}) := \sum_{t=1}^T \gamma_t 
\cdot (1 - a_t(\underv) ) \cdot \underline{v}.
\]
which lower bounds the allocation inefficiency for the 
lower type.

\paragraph{Step 4: Combining the bounds.}
The total regret satisfies
\[
\mathrm{Reg} \geq \max\left(\mathrm{Reg}(\underline{v}), 
\mathrm{Reg}(\overline{v})\right).
\]

Observe that
\[
\frac{1}{\underline{v}} \mathrm{Reg}(\underline{v}) + 
\frac{1}{\overline{v} - \underline{v}} 
\mathrm{Reg}(\overline{v}) = T_\gamma. 
\]

By the AM-GM inequality applied to the weighted harmonic 
mean,
\begin{align*}
\mathrm{Reg} \geq T_\gamma \cdot 
\frac{\underline{v}(\overline{v} - \underline{v})}{\overline{v}} 
= \Omega(T_\gamma).
\end{align*}
This completes the proof.
\end{proof}

In step 3 above we used the below lemma. which shows that the non-discounted payment from the buyer must be upper bounded by the non-discounted utility that the buyer receives. 
\begin{lemma}
\label{lemma:non-discounted-payment}
Fix any type $v$. We have
\[
\sum_{t=1}^T  p_t(v) \le \sum_{t=1}^T a_t(v) \cdot v 
\]
\end{lemma}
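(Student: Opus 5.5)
The plan is to derive the inequality purely from the periodic individual rationality constraints on the truthful trajectory of type $v$, combined with an Abel summation that exploits the monotonicity $\gamma_1 \ge \gamma_2 \ge \dots \ge \gamma_T > 0$. I would not use incentive compatibility across different reports at all. For brevity, write $x_s := a_s(v)\,v - p_s(v)$ for the per-round (undiscounted) surplus of a truthful type-$v$ agent. The target is $\sum_{s=1}^T x_s \ge 0$.

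\emph{Step 1: tail sums are nonnegative.} Define $R_k := \sum_{s=k}^T \gamma_s x_s$ for $k=1,\dots,T$, and set $R_{T+1}:=0$. Taking $\hat v = v$ in the IC/PIR constraint gives $U_{1:T}(v;v) \ge U_{1:t}(v;v)$ for every $t$, which is exactly $R_{t+1} \ge 0$ for $t=1,\dots,T-1$. For $R_1 \ge 0$ I need full participation to beat dropping out before round 1. I would obtain this by reading the constraint with $t=0$ (empty sum, utility $0$), which is the same participation inequality the paper already uses in Step 1 of the proof of Theorem~\ref{thm:lower-bound}. Thus $R_k \ge 0$ for all $k$.

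\emph{Step 2: Abel summation.} Since $\gamma_s x_s = R_s - R_{s+1}$, we have
\[
\sum_{s=1}^T x_s \;=\; \sum_{s=1}^T \frac{R_s - R_{s+1}}{\gamma_s} \;=\; \frac{R_1}{\gamma_1} + \sum_{k=2}^{T} R_k\left(\frac{1}{\gamma_k} - \frac{1}{\gamma_{k-1}}\right),
\]
where the term involving $R_{T+1}$ has been dropped because $R_{T+1}=0$. Each coefficient $1/\gamma_k - 1/\gamma_{k-1}$ is nonnegative because the discount factors are nonincreasing, and each $R_k$ is nonnegative by Step 1. Hence the right-hand side is nonnegative, which rearranges to $\sum_t p_t(v) \le \sum_t a_t(v)\, v$.

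The only delicate point is Step 1, namely justifying $R_1 \ge 0$ (ex-ante participation) and the tail constraints from the paper's formulation of PIR. The Abel summation itself is routine. If the $t=0$ reading is not allowed, I would instead appeal to the right to drop out at the beginning of round 1, which the interaction protocol explicitly grants. The lemma is used for the truthful trajectory of an IC/PIR direct mechanism. By Theorem~\ref{thm:menu-indirect-to-direct}, such a trajectory is available for any indirect menu mechanism, so the lemma applies there as well.
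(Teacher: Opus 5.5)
Your proof is correct and is essentially the paper's argument: the paper multiplies the tail participation constraint starting at round $s$ by a weight $c_s$ chosen so that $\gamma_s\sum_{t\le s}c_t=1$, i.e.\ $c_1=1/\gamma_1$ and $c_s=1/\gamma_s-1/\gamma_{s-1}$, which are exactly your Abel-summation coefficients. These weights are only nonnegative rather than strictly positive as the paper states, which is all that is needed, and your treatment of $R_1\ge 0$ matches the paper, which also counts $s=1$ among the PIR constraints.
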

\begin{proof}
By PIR, we have a sequence of participation constraints. The inequality with the summand $t=s$ to $T$ means that the agent prefers continuing participating in the contracts rather than dropping out. 
\begin{align*}
\sum_{t=s}^T \gamma_t \cdot [a_t(v) \cdot v - p_t(v)] &\ge 0 \\
\end{align*}
Since the sequence $\gamma_t$ is decreasing, we can find a sequence of appropriate positive constants $\set{c_t} > 0$ such that for any $s \in [1,T]$,
\[
\gamma_s \sum_{t=1}^{s} c_t = 1. 
\]
Then multiplying the participation constraint in round $s$ by $c_s$ and summing the sequence of inequality together gives the desired bound. 
\end{proof}

\end{document}